\newtheorem{theorem}{Theorem}
\begin{document}
%
\title{Learning End-User Behavior for Optimized Bidding in HetNets: Impact on User/Network Association}
%
%
%

\author{Mohammad~Yousefvand,~\IEEEmembership{Member,~IEEE,} Narayan~Mandayam,~\IEEEmembership{Fellow,~IEEE,}
        
\thanks{M. Yousefvand and N. Mandayam are with the Wireless Information Network Lab (WINLAB), Department of Electrical and Computer Engineering, Rutgers University, New Brunswick
	NJ, 08902,  e-mails: {\{my342, narayan\}@winlab.rutgers.edu}.}
\thanks{This work is supported in part by the U.S. National Science Foundation under Grant No. 1421961 and Grant No. ACI-1541069}
}

%
%

\markboth{Journal of \LaTeX\ Class Files,~Vol.~14, No.~8, August~2019}%
{Shell \MakeLowercase{\textit{et al.}}: Bare Demo of IEEEtran.cls for IEEE Communications Society Journals}
%



\maketitle

\begin{abstract}
We study the impact of end-user behavior on service provider (SP) bidding and user/network association in a HetNet with multiple SPs while considering the uncertainty in the service guarantees offered by the SPs. Using Prospect Theory (PT) to model end-user decision making that deviates from expected utility theory (EUT), we formulate user association with SPs as a multiple leader Stackelberg game where each SP offers a bid to each user that includes a data rate with a certain probabilistic service guarantee and at a given price, while the user chooses the best offer among multiple such bids. We show that when users underweight the advertised service guarantees of the SPs (a behavior observed under uncertainty), the rejection rate of the bids increases dramatically which in turn decreases the SPs utilities and service rates. To overcome this, we propose a two-stage learning-based optimized bidding framework for SPs.  In the first stage, we use a support vector machine (SVM) learning algorithm to predict users' binary decisions (accept/reject bids), and then in the second stage we cast the utility-optimized bidding problem as a Markov Decision Problem (MDP) and propose a reinforcement learning-based dynamic programming algorithm to efficiently solve it. Simulation results and computational complexity analysis validate the efficiency of our proposed model.
\end{abstract}

\begin{IEEEkeywords}
HetNets, Bidding, Pricing, Stackelberg Game, EUT, Prospect Theory, Reinforcement Learning, SVM.
\end{IEEEkeywords}

%
\IEEEpeerreviewmaketitle

\section{Introduction}
%
%
%

\IEEEPARstart{U}{ltra-dense} heterogeneous networks (HetNets) with 
overlaid macro and small cells is a design solution that has emerged in 5G networks to cope with the increasing demand from mobile users for more capacity and higher data rates. In such HetNets users could have the option of choosing from multiple service providers (SPs), the one that offers them the best data service to meet their current demands and application requirements. In such a scenario, SPs will have to dynamically compete with each other to get the users to connect to their network by offering attractive service guarantees and prices. Therefore, user/network association and smart data pricing are extremely important from both user and SP perspectives \cite{Liu:2016:UAN,Sen:2013:SDP,Sen:2015:SDP}.
Earlier work in the literature on both user/network association as well as pricing (discussed in section II) has primarily relied on models based on Expected Utility Theory (EUT). When a SP controls access to end-users via differentiated and hierarchical monetary pricing, then the performance of the network is directly subject to end-user decision-making that has shown to deviate from EUT in many cases and is better captured by models based on the Nobel Prize-winning Prospect Theory (PT) \cite{Kahneman:1979:PT}. 

Due to uncertainties in channel and traffic conditions, when the advertised data rates offered by SPs can only be met with probabilistic guarantees, then these probabilities are not necessarily the same from user and SPs perspectives owing to the subjective biases of human decision making. This disparity can result in the rejection of SP offers as shown in our previous work \cite{Yousefvand:2018:IEU}. In fact, it was shown in \cite{Yousefvand:2018:IEU} that when users underweight the advertised service guarantees of the SPs (a behavior observed under uncertainty), the rejection rate of the bids increases dramatically which in turn decreases the SPs utilities and service rates. In this paper, to overcome this, we propose a two-stage learning-based optimized bidding framework for SPs.  In the first stage, we use a support vector machine learning algorithm to predict users' binary decisions (accept/reject bids), and then in the second stage we cast the utility-optimized bidding problem as a Markov Decision Problem (MDP) and propose a reinforcement learning-based dynamic programming algorithm to efficiently solve it. Simulation results and computational complexity analysis validate the efficiency of our proposed model.

The rest of this paper is organized as follows. In section \ref{sec:RelatedWork}, we review the related works. In section \ref{sec:SystemModel}, we present the HetNet model used along with optimization problems from the user and SP perspectives. In section \ref{sec:LBOB}, we introduce the proposed two-stage learning-based optimized bidding method for SPs to learn the end users behavior and optimize their bids based on that. We validate the efficiency of the proposed model in section \ref{sec:SimulationResults}, and conclude in section \ref{sec:Conclusion}.

\section{Related Work}
\label{sec:RelatedWork}
Numerous studies have been done in recent years to address user association in HetNets \cite{Liu:2016:UAN}. Due to the inherent interdependencies between user/cell association and resource allocation problems in HetNets and their direct impacts on each other, these two problems are usually jointly addressed and optimized with respect to a given performance parameter \cite{Fooladivanda:2013:JRA,Kuang:2016:OJU}. 
The diverse range of performance criteria considered in the user/network association schemes proposed for HetNets spans from load balancing \cite{Hirata:2018:OHD,Huang:2018:LOC,Lee:2018:UAB}, to energy efficiency \cite{Ding:2018:EEU,Li:2018:EEU,Farooq:2018:UTP,Yin:2019:EAJ}, interference management \cite{Feng:2018:IMU}, coverage maximization \cite{Hattab:2018:CRM}, and latency minimization \cite{Elkourdi:2018:TLL}. Also, in terms of the methodology, the range of approaches proposed for user/network association in HetNets spans from optimization methods \cite{Li:2018:EEU}, to game theoretic solutions \cite{Yin:2019:EAJ,Elkourdi:2018:TLL,Sun:2018:CAW,Brihi:2018:NPQ,Waheidi:2019:UDM,Haddad:2017:HNS}, evolutionary algorithms\cite{Amine:2018:NUA}, estimation methods \cite{Hattab:2018:LTR} and learning algorithms \cite{Pervez:2018:MBU}.

Moreover, several works investigated the role of network economics on user/network association and allocation of network resources, via supply and demand-based smart pricing models \cite{Iosifidis:2015:DAM,Im:2016:AMUSE, Ma:2018:DPP,Wang:2019:DCM,Sen:2019:TDP}. For example, in \cite{Iosifidis:2015:DAM} an iterative double-auction mechanism is proposed to offload traffic from macro BSs to third party small cell access points. A cost-aware adaptive bandwidth management mechanism for empowering users to make traffic offloading decisions is proposed in \cite{Im:2016:AMUSE}. The effects of users social learning on service provider’s dynamic pricing policies is investigated in \cite{Ma:2018:DPP}. In \cite{Wang:2019:DCM} the duopoly competition between mobile network operators for mobile data plans with time flexibility is characterized using a three stage game model. In \cite{Sen:2019:TDP} a framework for realization of time depending pricing for multimedia data traffic is proposed to modify users' behavior and prevent congestion. The role of PT in wireless data pricing has been explored in \cite{Li:2014:WUI,Yang:2014:IEU} where users use their subjective biases in evaluating objective probabilistic parameters, and has been observed via human subject studies in \cite{Yang:2015:PPC}.

In contrast to the earlier works, in this paper, using PT we address both user/network association and data pricing with emphasis on end-user behavior and decision making under uncertainty. We model the HetNet with probabilistic service guarantees and PT based decision making using the approach in \cite{Yousefvand:2018:IEU} where it was observed that the underweighting of the service guarantees by the users results in an increased rejection of the SP bids. To overcome this, a heuristic solution based on expanding the bandwidth available to the SPs bids was proposed in \cite{Yousefvand:2018:IEU}. However, it required the SPs to have knowledge of how the users perceive the uncertainty in the service guarantees, while in reality, SPs don’t have access to such information. The focus of the current paper is to overcome the bid rejection problem via a completely different approach, namely  a two-stage learning-based optimized bidding framework for SPs. In the first stage, we use a support vector machine learning algorithm to predict users’ binary decisions, and then in the second stage we cast the utility-optimized bidding problem as a Markov Decision Problem and propose a reinforcement learning-based dynamic programming algorithm to efficiently solve it. 

\section{System Model and Problem Formulation}
\label{sec:SystemModel}

\subsection{Network Model}
\label{NM}
To study user association in HetNets, we developed a two-tier HetNet scenario which includes $N$ wireless users that are randomly distributed within the coverage area of $K$ base stations. As shown in Fig. \ref{Fig:figl}, in our HetNet model there is one macrocell LTE BS located in the center of the area and $K-1$ overlaid small cell WiFi access points who are competing with each other to serve the users in the HetNet. We assume each user in the HetNet receives several bids from service providers (SPs) in both cellular and WiFi tiers, where each bid includes a data rate with a certain probabilistic service guarantee and at a given price. Upon receiving such bids, the user makes a binary decision to accept or reject each of the received bids.
\begin{figure}[tb!]
	\includegraphics[width=\linewidth]{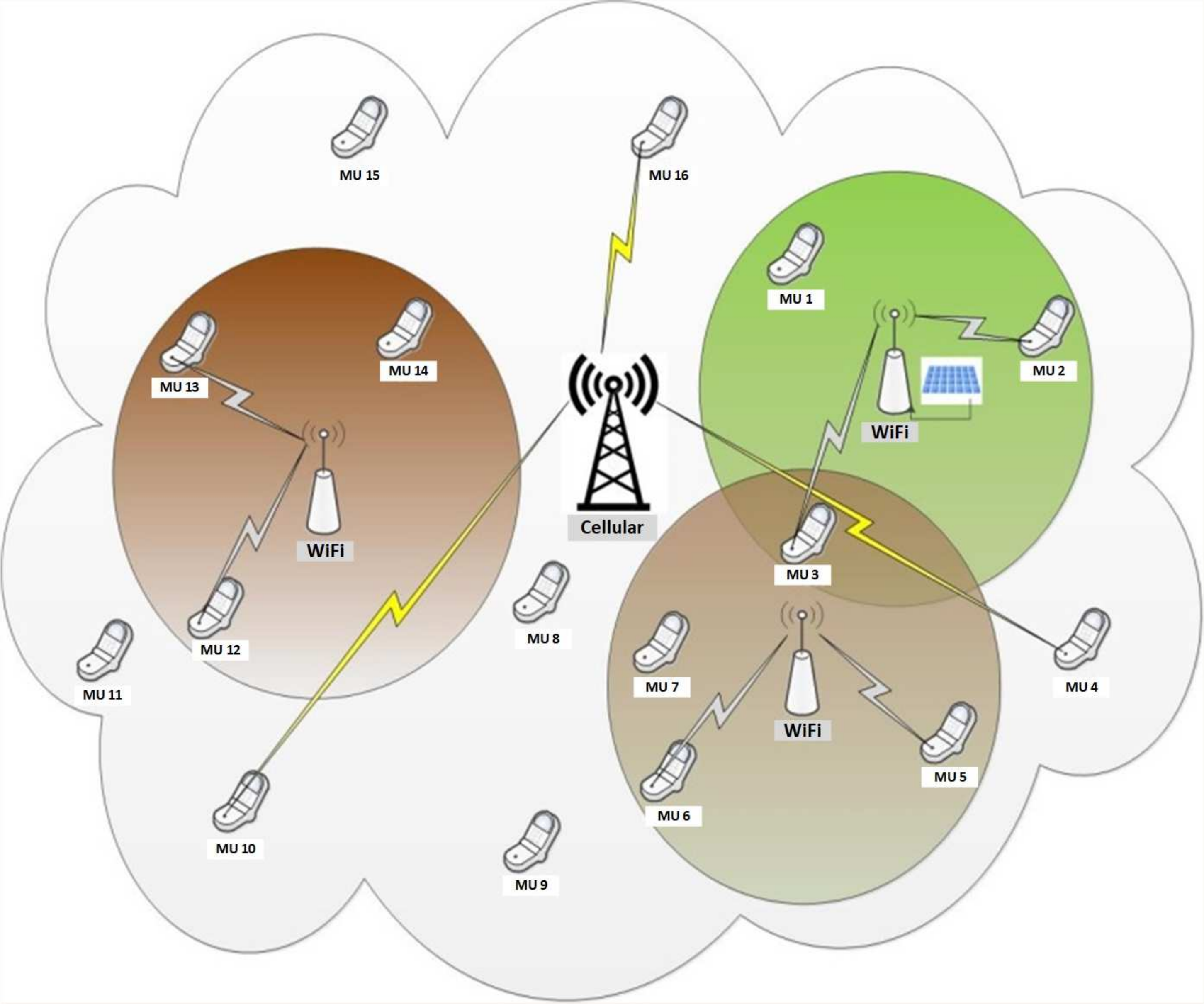}
	\centering
	\caption{Heterogeneous Network (HetNet) Model.}
	\label{Fig:figl}
\end{figure}

\subsection{Stackelberg Game for User Association in HetNets}
\label{subsec:SM}
In our HetNet model, each user receives $K$ different bids from all $K$ base stations, i.e., one bid from the cellular BS and $K-1$ bids from the WiFi BSs. To enable multihoming, we assume each user can be simultaneously connected to both cellular and WiFi SPs to receive the data service. Specifically, we assume that each user can only be associated to the cellular SP and the best serving WiFi SP for that specific user, which is the SP who offers a bid with highest utility among all WiFi SPs. Hence, we model user association problem in this work as a Stackelberg game with two leaders and one follower, where SPs act as the leaders who make service offers to the user, and the user serves as a follower who accepts or rejects the received bids. Since we have $N$ users in our HetNet model, to solve the user/network association problem in a distributed manner, we need to solve $N$ stackelberg games each with three players including WiFi SP and cellular SP as the leaders and one user as the follower. Assuming the BS's maximum bandwidth budget per user is fixed and the same for all users as shown in Eq. \ref{eq14}, these $N$ Stackelberg games are independent, and hence we consider and solve only one of these Stackelberg games without loss of generality. Note that finding the optimal bandwidth/power allocation for SPs is a NP-hard problem \cite{Yousefvand:2017:DES} and not the focus of this paper. In the remainder of this paper, we focus on one of these games.
Using the index $w$ for user's preferred WiFi SP and the index $c$ for cellular SP, we denote the $bids$ of WiFi and cellular SPs with triples $(b_w, r_{EUT}(b_w), BW_{w,EUT} )$, and $(b_c, r_{EUT}(b_c), BW_{c,EUT} )$, respectively, in which the first term shows the advertised data rate, the second term is the proposed price for the offered data rate, and the third term is the amount of BW that will be allocated to the user by each SP. User decisions are binary, which means the user either accepts a bid or rejects it, and there is no probabilistic decision by user. We denote user decisions on cellular and WiFi SPs bids with $p_c$ and $p_w$ respectively, which are binary variables. So, the tuple $(p_c,p_w)$ represents user’s strategy with regard to the received bids, hence, the user has four possible strategies $(0,0)$, $(0,1)$, $(1,0)$ and $(1,1)$. Fig. \ref{Fig:StackelbergGame} illustrates the Stackelberg game model between cellular and WiFi SPs and mobile user.
\begin{figure}[tb!]
	\includegraphics[width=\linewidth]{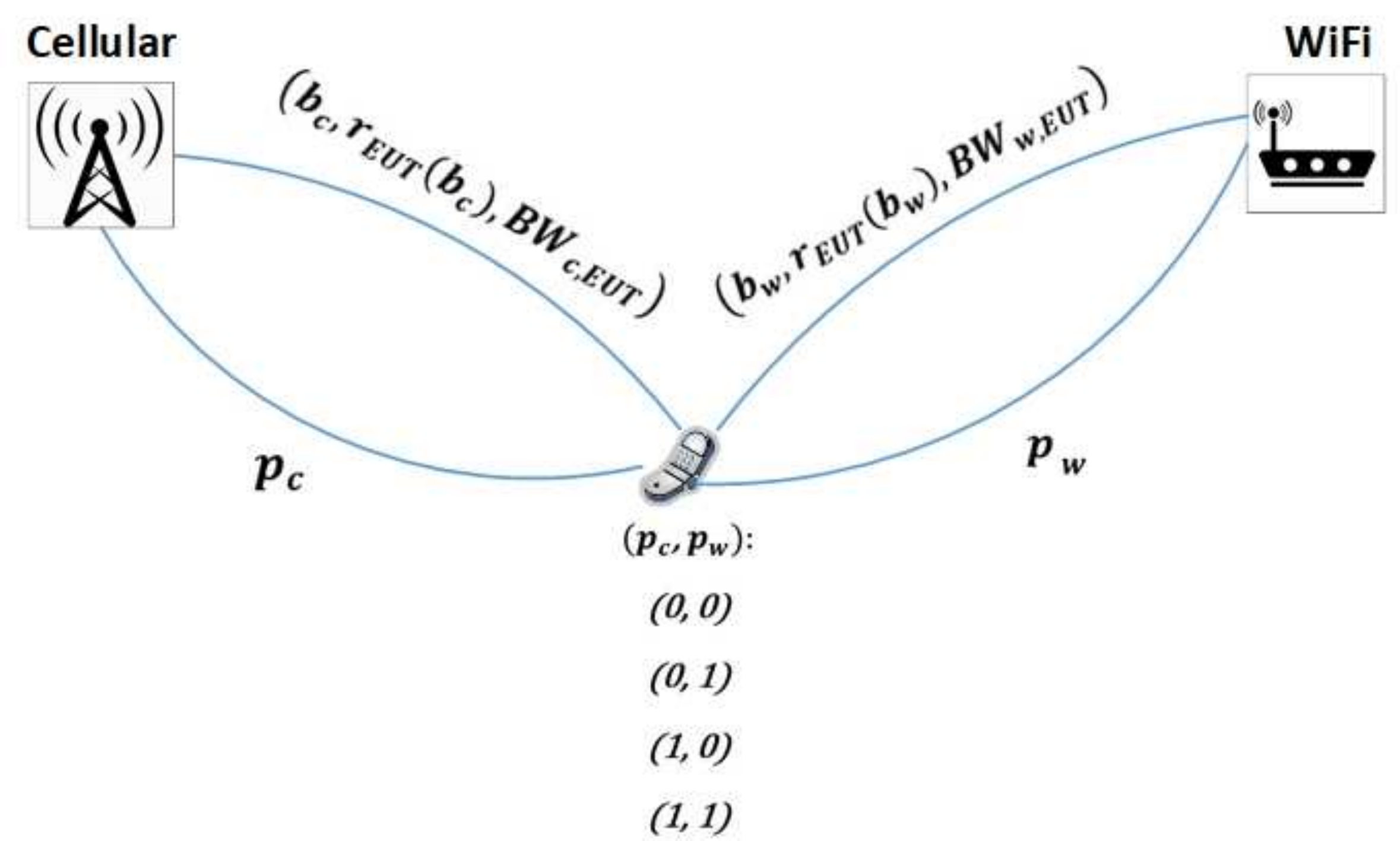}
	\centering
	\caption{Stackelberg Game Model.}
	\label{Fig:StackelbergGame}
\end{figure}

 All three players have a cost function and a benefit function, and their utility functions are simply the difference of the cost and benefit functions. We represent user’s utility function as
\begin{equation}
\label{eq1}
U_{user,{\rm{ }}EUT}{\rm{ }}\left\{ {p_c,p_w{\rm{ }}} \right\} = H\left( {B_{joint}{\rm{ }}} \right) - c_{user}{\rm{ }}\left( {p_w,p_c{\rm{ }}} \right),
\end{equation}
where $H(B_{joint})= \delta(B_{joint})^{1/\theta} $ is the users benefit function which is a logarithmic concave function of the user's expected aggregate data rate, $B_{joint}$, and $c_{user}{\rm{ }}\left( {p_w,p_c{\rm{ }}} \right)= p_w r_{EUT}(b_w) + p_c r_{EUT} (b_c)$ is the user’s cost function which shows the aggregate price that must be paid by user to the SPs for each $(p_c, p_w)$ strategy. The user's expected aggregate data rate is defined by 
\begin{equation}
B_{joint} = b_c \bar {F} _{B_c}( b_c,BW_{c,EUT})p_c + b_w \bar {F}_{B_w}( b_w,BW_{w,EUT})p_w, 
\end{equation}
 where $\bar F_{B_c}( b_c,BW_{c,EUT})$ is the service guarantee of the bid received from the cellular SP, and denotes the probability of having the actual data rate of user from cellular SP, $B_c$ which is a random variable, equal or higher than the advertised data rate by cellular SP, $b_c$, and is given by
\begin{equation}
\bar F_{B_c}( b_c,BW_{c,EUT})= Pr(B_c \geq b_c |BW_{c,EUT}),
\end{equation}
\noindent where for a fixed data rate $b_c$, a larger $BW_{c,EUT}$ yields a higher service guarantee.
Similar definition holds for $\bar F_{B_w}( b_w,BW_{w,EUT})$ which is the service guarantee of the bid received from WiFi SP.

Once the user chooses its best response strategy, $(p_c^*, p_w^*)$, the SPs will respond with their best response strategies to maximize their own utilities based on the received user decision. The utility of the WiFi SP, $U_{SP,w}$ is defined as
\begin{equation}
\label{eq2}
U_{SP,w} = p_w r_{EUT}( {b_w}) - C_w( b_w,BW_{w,EUT}),
\end{equation}
and the utility of cellular SP, $U_{SP,c}$ is defined as
\begin{equation}
\label{eq3}
U_{SP,c} = p_c r_{EUT}( {b_c}) - C_c( b_c,BW_{c,EUT}),
\end{equation} 
where, the first term in both of these equations is the SPs' expected payoff from the user, and the second term is their incurred service cost. The SPs' payoff from the user is equal to the offered price in their bids if the user accepts their bids, otherwise their payoff from the user is equal to zero. In this work, we assume both SPs use convex pricing functions, as $r_{EUT}(b_w)= {\alpha_1} (b_w)^{\beta_1}$, and $r_{EUT}(b_c)= {\alpha_2} (b_c)^{\beta_2}$, $({\beta_1},{\beta_2}>1)$, where ${\alpha_1}$ and ${\beta_1}$ are payoff parameters for the WiFi SP, and ${\alpha_2}$ and ${\beta_2}$ are payoff parameters for the cellular SP. We also assume the SPs use linear cost functions, as $C_w( b_w,BW_{w,EUT})= c_1 (b_w) + c_2(BW_{w,EUT})$, and $C_c( b_c,BW_{c,EUT})= c_3 (b_c) + c_4(BW_{c,EUT})$, where $c_1$ and $c_2$ are cost coefficients for the WiFi SP, and  $c_3$ and $c_4$ are cost coefficients for the cellular SP. To satisfy the user's minimum data rate constraint, the SPs must ensure that their offered data rate is higher than the minimum data rate required by the user, $b_{min}$. Thus, the data rate constraints for the WiFi and the cellular SPs will be defined as below, respectively:
\begin{align}
\label{eq4}
b_w \bar F _{B_w}( {b_w,BW_{w,EUT}}) \ge b_{min},\\
\label{eq5}
b_c \bar F _{B_c}( {b_c,BW_{c,EUT}}) \ge b_{min}.
\end{align}

\subsection{User Optimization Problem}
Upon receiving the bids from the SPs, the user will run an optimization problem to find its best strategy with regard to the received bids. We assume the user's payoff function from the received data is a concave function, as defined in Eq. \ref{eq1}, in which the user's utility is not linearly increased with increasing the data rate. It means that as long as the minimum data rate constraint is satisfied, the user is not willing to pay extra price with linear relation to the extra data rate offered by SPs. To find its best response strategy,$(p_c^*, p_w^*)$, user will run the following optimization problem (denoted as $Max1$): \\\\
\textbf {Max1 Problem:} \textit{User's Utility Maximization}.\\
---------------------------------------------------------------------------
\begin{align}
\label{eq6}
& \max_{p_c, p_w}~[\,\delta(B_{joint})^{1/\theta} -~p_w{\alpha_1} (b_w)^{\beta _1}-~p_c{\alpha_2} (b_c)^{\beta _2}~] && \\
& \text{subject to}\nonumber \\
\label{eq7}
& B_{joint}\ge~b_{min}  \\
\label{eq8}
&  \delta(B_{joint})^{1/\theta} \ge~~p_w{\alpha_1} (b_w)^{\beta _1}+~p_c{\alpha_2} (b_c)^{\beta _2}, \\
\label{eq9}
& p_c, p_w \in \{ 0,1\}
\end{align}

As shown above, the user has two major constraints for bid selection. The first constraint, shown in Eq. \ref{eq7}, is the user's data rate constraint which ensures the expected data rate for the user is higher than its minimum required data rate, $b_{min}$. The second constraint defined in Eq. \ref{eq8} is the user's utility constraint which guarantees a positive utility for the user from its strategy. 

\subsection{SPs Optimization Problems}
When the SPs receive user's decision with regard to their bids, they choose their best response strategy. The the best response strategy $(b_w^*, BW_{w,EUT}^*)$ for the WiFi BSs is obtained by solving the optimization problem below (denoted as $Max2$):\\\\
\textbf {Max2 Problem:} \textit{WiFi SP's Utility Maximization}.\\
---------------------------------------------------------------------------
\begin{align}
\label{eq10}
& \max_{b_w, BW_{w,EUT}}~[p_w {\alpha_1} (b_w)^{\beta _1} - (c_1 b_w + c_2 BW_{w,EUT})~] \\
& \text{subject to}\nonumber \\
\label{eq11}
& 0 \leq BW_{w,EUT} \leq BW_{w,max}, \\
\label{eq12}
& 0 \leq b_w \leq b_{w,max}, \\
\label{eq13}
&  b_w \bar F _{B_w}( {b_w,BW_{w,EUT}}) \ge b_{min},
\end{align}
in which, $BW_{w,max}$ is the maximum amount of bandwidth that can be allocated to the user by the WiFi SP, and $b_{w,max}$ is the maximum achievable data rate by the user from the WiFi SP considering $BW_{w,max}$ and the gain of the channel between the WiFi SP and the user. We assume the SPs use a proportionally fair bandwidth allocation algorithm to determine the maximum amount of bandwidth that can be allocated to each of their users. Assuming $BW$ as the total amount of bandwidth available at SP $i$, $i \in\{w,c\}$, and $N$ as the total number of users in our HetNet, the maximum amount of bandwidth that can be allocated to each user by the SP $i$, $BW_{i,max}$ is given as
\begin{equation}
\begin{aligned}
\label{eq14}
& BW_{i,max}= (G_{BA}*~BW)/\sum_{j=1}^{N}({a_j * c_j}),\\
\end{aligned}
\end{equation}
in which $G_{BA}$ is the gain of bandwidth allocation which is less than one due to the guard bands between channels for preventing interference, $a_j$ is a binary variable showing the activity of the users which $a_j=1$ if the user $j$ is active and has data demand, otherwise $a_j=0$,  and $c_j$ is also a binary variable representing the coverage of the user $j$ by the the BS $i$, where $c_j=1$ if the SINR of the link between the user $j$ and the BS $i$ is higher than a certain threshold to have a reliable transmission, otherwise $c_j=0$. Considering $BW_{i,max}$, the maximum achievable data rate by user, $b_{i,max}$ is given by:

\begin{equation}
\begin{aligned}
\label{eq15}
& b_{i,max}= BW_{i,max} \log (1+\frac{P h_j a_j c_j}{\sigma^2}),\\
\end{aligned}
\end{equation}
where, $P$ is the transmit power of the BS, $h_j$ is the channel gain between the user $j$ and the SP $i$, and $\sigma^2$ is the noise variance over the transmission channel. Similarly, the cellular SP runs $Max3$ optimization problem, which is defined exactly similar to $Max2$, except the index $w$ is replaced with the index $c$ and the parameters $\alpha_1$, $\beta _1$, $c_1$, $c_2$ are replaced with the parameters $\alpha_2$, $\beta _2$, $c_3$, $c_4$, respectively. In Theorem \ref{Theorem1}, we prove that for both WiFi and cellular SPs, the best response strategies derived from $Max2$ and $Max3$ optimization problems, satisfy the minimum data rate constraint with equality. 
\begin{theorem}
	\label{Theorem1}
	The SPs best response strategies, derived from $Max2$ and $Max3$ problems, will always satisfy the minimum data rate constraint in the boundary of its feasibility region, i.e. we always have $b_w^* \bar F _{B_w}( {b_w^*,BW_{w,EUT}^*}) = b_{min}$, and $b_c^* \bar F _{B_c}( {b_c^*,BW_{c,EUT}^*}) = b_{min}$ for the WiFi and the Cellular SPs, respectively.
\end{theorem}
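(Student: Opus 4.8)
The plan is to prove the claim for the WiFi SP ($Max2$); the cellular case ($Max3$) then follows verbatim after relabeling the index $w \to c$ and the parameters $(\alpha_1,\beta_1,c_1,c_2)\to(\alpha_2,\beta_2,c_3,c_4)$. The key structural observation I would exploit is that the decision variable $BW_{w,EUT}$ enters the objective $p_w\alpha_1(b_w)^{\beta_1}-(c_1 b_w + c_2 BW_{w,EUT})$ only through the cost term $-c_2 BW_{w,EUT}$ and brings no direct revenue. Hence, for any fixed $b_w$, the WiFi SP's utility is strictly decreasing in $BW_{w,EUT}$ (assuming $c_2>0$). The only incentive to allocate more bandwidth is to meet the service-guarantee constraint \eqref{eq13}, so the SP should allocate exactly the minimum bandwidth that satisfies it, which renders the constraint active.

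First I would make the monotonicity of the constraint precise. As noted in the model, for a fixed data rate a larger bandwidth yields a higher service guarantee, so $\bar F_{B_w}(b_w, BW_{w,EUT})$ is increasing in $BW_{w,EUT}$; I take it to be continuous as well. Consequently the left-hand side $g(BW_{w,EUT}) := b_w\bar F_{B_w}(b_w, BW_{w,EUT})$ of constraint \eqref{eq13} is continuous and nondecreasing in $BW_{w,EUT}$ for any fixed $b_w>0$.

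The main step is a contradiction argument. Suppose the optimizer $(b_w^*, BW_{w,EUT}^*)$ satisfied the service-guarantee constraint strictly, i.e. $b_w^*\bar F_{B_w}(b_w^*, BW_{w,EUT}^*) > b_{min}$. Feasibility forces $b_w^* > 0$, since otherwise the left side of \eqref{eq13} vanishes and violates $b_{min}>0$. By continuity of $g$ there is an $\varepsilon>0$ with $BW_{w,EUT}^* - \varepsilon \ge 0$ for which $(b_w^*, BW_{w,EUT}^*-\varepsilon)$ still satisfies \eqref{eq13} together with the box constraints \eqref{eq11}--\eqref{eq12}. But lowering the bandwidth strictly raises the objective by $c_2\varepsilon>0$, contradicting optimality. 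Hence the constraint must bind, $b_w^*\bar F_{B_w}(b_w^*, BW_{w,EUT}^*)=b_{min}$, which is exactly the claim.

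The step I expect to be the main obstacle is ruling out the corner case in which the cost-minimizing bandwidth coincides with the lower box boundary $BW_{w,EUT}=0$ while \eqref{eq13} is still slack, since that would break the downward perturbation. I would dispose of it by a limiting observation: at zero allocated bandwidth the achievable rate is zero, so $\bar F_{B_w}(b_w,0)=Pr(B_w\ge b_w\mid BW=0)=0$ for every $b_w>0$, giving $g(0)=0<b_{min}$. Thus \eqref{eq13} is violated at $BW_{w,EUT}=0$, and by continuity and monotonicity the smallest feasible bandwidth lies strictly inside $(0, BW_{w,max}]$ and makes \eqref{eq13} bind, so the perturbation above is always available. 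Here I implicitly assume the problem is feasible, i.e. the required service guarantee is attainable within $BW_{w,max}$.
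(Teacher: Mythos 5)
Your proof is correct and follows essentially the same route as the paper's: a contradiction argument showing that if the service-guarantee constraint were slack at the optimum, the SP could reduce $BW_{w,EUT}$ (exploiting that $\bar F_{B_w}$ is increasing in bandwidth while the utility is decreasing in it) and strictly improve its utility. Your treatment is in fact slightly more careful than the paper's, since you justify via continuity and the $BW_{w,EUT}=0$ corner case that the downward perturbation exists, a step the paper's proof takes for granted when it asserts the existence of $BW_{w,EUT}^\prime$ achieving equality.
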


\begin{proof}
	We prove this by contradiction for the WiFi SP.
	Assume $({b_w^*,BW_{w,EUT}^*})$ is the optimal solution for $Max2$ problem, and $BW_{w,EUT}^*$ is not a marginal BW, i.e. $b_w^* \bar F _{B_w}( {b_w^*,BW_{w,EUT}^*}) \neq b_{min}$. Considering Eq. \ref{eq13}, we can infer that $b_w^* \bar F _{B_w}( {b_w^*,BW_{w,EUT}^*}) > b_{min}$ (1).
	In this case $\exists~BW_{w,EUT}^\prime $ such that $b_w^* \bar F _{B_w}( {b_w^*,BW_{w,EUT}^\prime}) = b_{min}$ (2). From (1) and (2), and considering the direct relation between $F _{B_w}( {b_w^*,BW_{w,EUT}^*})$ and $BW_{w,EUT}^*$, we can infer that $BW_{w,EUT}^\prime<BW_{w,EUT}^*$. Now, considering the WiFi SP's utility function defined in Eq. (11) and due to its reverse relation with the advertised bandwidth, $BW_{w,EUT}$, we can infer that $U_{SP,w}({b_w^*}, BW_{w,EUT}^\prime)>U_{SP,w}({b_w^*}, BW_{w,EUT}^*)$ which is in conflict with the initial assumption that $({b_w^*,BW_{w,EUT}^*})$ is the optimal solution for $Max2$ problem. So, the proof is complete. The same proof is valid for the cellular SP.
\end{proof}
A characterization of the Nash Equilibria under the EUT
and the PT models follows directly from the above Theorem,
and can be found in \cite{Yousefvand:2018:IEU}. 
To model the effects of PT on user decision making, we assume that users use their subjective biases in evaluating the bids made by the SPs. Specifically, we use the probability weighting effect (PWE) of PT to model how the user evaluates the probabilistic service guarantee that is part of the SPs bids. We use the Prelec function \cite{Prelec:1998:PWF} to model the PWE under PT, which is given by
\begin{align}
\label{perlec}
&& w(p)=exp(-{(-ln(p)}^\alpha ),~~(0 < \alpha< 1 ),
\end{align}
\noindent where $w(p)$ is the weighted version of the probability $p$, which is used by end-users while making decisions based on probabilistic parameters, and $\alpha$ is the Prelec function parameter that describes the deviation of $w(p)$ from $p$. This function is a regressive and s-shaped function which is concave in $0<p<1/e$ region and convex in $1/e<p<1$ region, and $w(p)>p$ in the former domain while $w(p)<p$ in the later. Under this PWE model, we can infer that the user overestimates the service guarantees of the received offers if the advertised service guarantees are less than $1/e=0.37\%$, and user underestimates them if advertised service guarantees are higher than $0.37\%$.  Assuming that SP networks are well designed to
offer service guarantees higher than $1/e$, we focus on the underestimating of service guarantees by the user under PT. It is also justified by the fact that end-users in real world wireless networks typically perceive the quality of their service as lower than that advertised by the SPs \cite{Measuring:online:MMB,Measuring:online:MBA}.

The results in \cite{Yousefvand:2018:IEU} reveal that when users underweight the SPs advertised service guarantees, the rejection rate of the SP bids, and consequently the resulting utility and revenue for SPs decreases dramatically. To overcome this, we discuss next a learning-based optimized bidding mechanism for SPs.

\section{Learning-based Optimized Bidding Method for SPs}
\label{sec:LBOB}
 To find the utility-optimized bids, we propose a two-stage learning method for SPs where in the first stage, SPs learn a binary user decision classifier function via a support vector machine learning algorithm. Then in the second stage, SPs find the utility-optimized bids using a reinforcement learning-based bidding algorithm in which the classifier function obtained in the first stage is used to evaluate the value of each potential bid by predicting the user decision with regard to that bid. A heuristic solution based on expanding the offered bandwidth in the SPs bids was proposed in \cite{Yousefvand:2018:IEU}. However, it requires the SPs to have knowledge of how the users perceive the uncertainty in the service guarantees, while in reality, SPs don’t have access to such information.
\subsection{SVM-based Classifier for Predicting User Decisions}
\label{sec:SVM}
Using a record of previous user decisions in response to different offered services, and noting the binary nature of user decisions in our work, we can efficiently train a SVM classifier to predict user decisions with regard to any future bid. The set of training data, here denoted as D, is given by
\begin{equation}
D=\{(x_1, y_1 ), (x_2, y_2 ), …  (x_N, y_N)\}      
\label{eq16}  
\end{equation}

\noindent which includes $N$ entries of labeled data samples $(x_i, y_i),$ where the feature vector $x_i=(b_i, r(b_i), BW_i)$ represents the bid $i$ features including its advertised data rate, the offered price, and offered bandwidth, respectively, and $y_i$ represents the previous binary decision of user in response to bid $i$. Using the gathered training data, the SVM classifier parameters, $w$ and  $b$, can be derived by solving the optimization problem given in (\ref{eq18})-(\ref{eq20}) using the SVM learning algorithm. \\\\
\textbf {SVM Learning Optimization}.\\
---------------------------------------------------------------------------
\begin{align}
\label{eq18}
& \min_{w, b, \epsilon}~1/2~ w^{T}w + c \sum_{i=1}^{N} \epsilon_i, &\\
& \text{subject to:}\nonumber& \\
\label{eq19}
& y_i~(w^T \phi {(x_i )}+b) \geq 1-\epsilon_i, &\\
\label{eq20}
& \epsilon_i \geq 0. &\\
\nonumber
\end{align}
In the SVM optimization objective function given in Eq. (\ref{eq18}), $2/w^T w$ is the width of the separating margin between two groups of accepted and rejected bids which is supposed to be maximized by minimizing its inverse, and $\epsilon_i$ is the amount of error in classifying each data point $i$, which is supposed to be minimized, and $c$ is a regularization parameter that defines the trade-off between these two objectives. The inequality constraint in Eq. (\ref{eq19}) enforces $w^T \phi {(x_i )}+b\geq 1-\epsilon_i$ for all accepted bids and $w^T \phi {(x_i )}+b\leq -(1-\epsilon_i)$ for all rejected bids, as it combines these two constraints.
The constraint in Eq. (\ref{eq20}) ensures that the amount of classification error for each data point $i$, $\epsilon_i$, is either zero for correctly classified points, or a positive value for misclassified points. After finding the SVM classifier parameters $w$ and $b$ from solving the SVM optimization, we can define the SVM classifier function $f(x_i)$ as

\begin{equation}
f(x_i )=w^T x_i+b= \begin{cases}
 \geq 0, & \text{i.e.}\ d(x_i)=1,~ accepted, \\
\leq 0, & \text{i.e.}\ d(x_i)=0,~rejected,
\end{cases}
\label{eq21}
\end{equation}
to predict the user's binary decision with regard to any given bid $x_i$, denoted as $d(x_i)$.

\subsection{Reinforcement Learning-based Optimized Bidding}
\label{sec:LOB}
To find the utility-optimized bids for SPs that are highly likely to be accepted by the user while maximizing their utility, we formulate the SPs optimized bidding problem as a Markov Decision Problem (MDP) with the tuple $(S,A_S,P_{SA},\gamma,R )$ in which $S$ denotes the set of states, $A_S$ defines the set of actions in each state, $P_{SA}$ is the set of state-action transition probabilities, $\gamma$ is the discount factor, and $R$ is the reward function. In this work, the set of states is defined as
\begin{equation}
S=\{s_1,s_2,…,s_{M*N} | s_i=(b_{s_i }, BW_{s_i } )\}
\label{eq22}
\end{equation}
\noindent in which, each state $s_i$ is associated with a tuple $(b_{s_i }, BW_{s_i })$ which is a potential bid and shows how much data rate, and bandwidth will be advertised to the user, respectively, if SPs bid $s_i$. Note that to cast the SPs bidding problem as a MDP, we need to have discrete set of states, and to do so we quantize the possible values for both originally continuous bid variables bandwidth, $BW$, and data rate, $b$, by considering $M$ possible values for data rate, and $N$ possible values for bandwidth. Since there is one state defined for each possible combination of the values of $b$ and $BW$, we have overall $M*N$ states in our MDP model, considering all such possible combinations.

\noindent $A_{S} = \{ A_{s_i} \}_{i=1}^{MN}$ is the set of  is the set of actions available in each state $s_i\in S,i=1,2,…,M*N$ . Taking each action $a_j$ in a given state $s_i$ results in a transition from state $s_i$ to the new state $s_j$ , with new data rate and bandwidth coordinates of $(b_{s_j}, BW_{s_j} )$. Hence, each action can be defined in terms of how that action changes the current data rate, and bandwidth coordinates of the system, and hence the action set for any given state  $s_i$, $A_{s_i}$, is given by
\begin{align}
\label{eq23}
&A_{s_i}=\{a_1,a_2,…,a_{M*N} | a_j=(\Delta b_j, \Delta BW_j)& \nonumber\\
&=(b_{s_j}- b_{s_i}, BW_{s_j} - BW_{s_i}) \}&
\end{align}
\noindent in which, there is one unique action defined for each possible transition from any state $s_i$ to any other state $s_j$. For example, taking action $a_j$ in state $s_i$ changes the state of the MDP from the state $s_i$ to the state $s_j$ by adding the $(\Delta b_j, \Delta BW_j)$ to the coordinates of the state $s_i$.  Equation (\ref{eq24}) explains how this transition, denoted as $s_i \overset{a_j}{\rightarrow} s_j$, happens between these two states in the MDP.
\begin{align}
\label{eq24}
&s_i \overset{a_j}{\rightarrow}: (b_{s_i},BW_{s_i})+(\Delta b_j, \Delta BW_j))\nonumber \\ &=(b_{s_i},BW_{s_i})+(b_{s_j}- b_{s_i}, BW_{s_j} - BW_{s_i})=(b_{s_i},BW_{s_i})=s_j 
\end{align}

\noindent$P_{SA}$ defines the transition probabilities for all state-action pairs, which in our model are deterministic and given as 
\begin{equation}
P_{SA}=\{Pr(s' | s_i,a_j ), \forall s_i,s' \in S , \forall a_j \in A_{s_i})\}
\label{eq25}
\end{equation} 

\noindent where 
\begin{equation}
Pr(s' | s_i,a_j )=\begin{cases}
1, &\text{if}\ s'= s_j, \\
0, &\text{if}\ s'\neq s_j.
\end{cases}
\label{eq26}
\end{equation}

\noindent $R(s_i,a_j )$ is the state-action reward function which defines an immediate reward that can be achieved from taking the action $a_j$ in state $s_i$. In our model, $R(s_i,a_j )$ is defined as
\begin{equation}
R(s_i,a_j )= d(s_j)* \{r(b_{s_i} ) )-c(b_{s_j}, BW{s_j} )\},   
\label{eq27}
\end{equation}
where $d(s_j)$ is the users predicted decision with regard to bid $s_j$ which is defined in Eq. (\ref{eq21}), $r(b_{s_i} )$ and $c(b_{s_j}, BW{s_j} )$ are the advertised price and the expected cost associated with bid $s_j$, respectively, which are defined in section \ref{subsec:SM}. In our model, the reward of taking any action $a_j$ in a given state $s_i$ to go to the new state $s_j$ is equal to the expected utility of bidding the bid $(b_{s_j}, BW{s_j})$ which is associated to the destination state $s_j$. \\
The goal for SPs is to find a bid which maximizes their utility, which in our MDP model is equivalent to finding the best state with an associated bid with the highest amount of reward or expected utility. Such a state is accessible from any initial state in the MDP, by taking a set of actions based on the optimal policy. A policy is any function $\pi:S\rightarrow A$ mapping from the states to the actions, and we say that we are executing some policy $\pi$ if whenever we are in any state $s$, we take action $a=\pi (s)$. The value function $V_\pi (s_i)$ for evaluating the value of each policy $\pi$ in each state $s_i$ is defined as

\begin{equation}
V_\pi (s_i )=R(s_i,\pi(s_i ))+\gamma \sum_{s'\in S} pr(s'| s_i,\pi(s_i ) )~V_\pi (s' )  
\label{eq28} 
\end{equation}
\noindent assuming $\pi(s_i )=a_j$, and since transition probabilities are binary and given in our MDP model, the value function can be simplified to

\begin{equation}
V_\pi (s_i )=R(s_i,a_j)+\gamma V_\pi (s_j )
\label{eq29}
\end{equation}
as in our model we have $Pr(s'|s_i,a_j )=0~\forall s'\neq s_j,$ and $Pr(s'|s_i,a_j )=1~if~s'=s_j$. Also, since the optimal state is accessible from any initial state in the MDP, we set the discount factor $\gamma =0$ to focus on the current action only and find the optimal action that takes us from the current state to the optimal state. Such an action is given by executing the optimal policy, $\pi^*$, in any given state $s_i$, as
\begin{equation}
\pi^* (s_i)= \underset{a_j \in A_{s_i}}{\operatorname{argmax}}~ [R(s_i, a_j)].
\label{eq30}
\end{equation}

To find the optimal policy which defines the best action for each state, we should theoretically exhaust all possible actions in each state, which is computationally intractable for MDPs with large number of states and actions. To solve this issue, we propose an efficient dynamic programming based algorithm which finds the optimal action for each state in the formulated MDP, with a logarithmic convergence rate.

\subsection{Dynamic Programming-based Optimized Bidding (DPOB) Algorithm}
\label{sec:LOB}
In the formulated MDP for optimized bidding, if user rejects any bid $s_i$, it will also reject any other bid $s_j$ that results in a lower expected utility for that user, considering the rationality of users and their objective to maximize their expected utility. In fact, if user rejects the bid $s_i=(b_{s_i}, BW_{s_i})$ it will reject any other bid $s_j=(b_{s_j}, BW_{s_j})$ if $(b_{s_i} \leq b_{s_j}) \land (BW_{s_i} \geq BW_{s_j})$ since $s_i$ Pareto dominates $s_j$ from the user's perspective, under such conditions. Likewise, when user accepts any bid $s_i$, the offering SP is not willing to offer any new bid $s_j$ if $s_j$ results in a lower utility for that SP. In fact, if user accepts any bid $s_i=(b_{s_i}, BW_{s_i})$, the offering SP will not offer any new bid $s_j=(b_{s_j}, BW_{s_j})$ to the user if $(b_{s_i} \geq b_{s_j}) \land (BW_{s_i} \leq BW_{s_j})$ since $s_i$ Pareto dominates $s_j$ from the SP's perspective, under such conditions. Note that based on the utility functions of user and SPs defined in section \ref{subsec:SM}, when other parameters remain the same, increasing the advertised data rate will increase the utility of SPs and decrease the utility of users, while increasing the allocated bandwidth increases the utility of users and decreases the utility of SPs. The Pareto optimal relation between different states in the MDP and their associated bids helps SPs to update the set of feasible actions in each state after taking any random action, by removing the infeasible actions from the list of actions in the newly visited state while executing the reinforcement learning algorithm. The proposed dynamic programming-based reinforcement learning algorithm for optimized bidding is given in Algorithm \ref{algorithm1}. Note that in Algorithm \ref{algorithm1}, instead of initializing the MDP with a random state/bid, we initialize it with the SP's previous bid derived from MAX2 optimization under EUT model, here denoted as $\bar{s}=(b^*_{EUT}, BW^*_{EUT})$, to increase the efficiency of the algorithm. The iterative pruning of infeasible actions from the list of actions in the proposed dynamic programming-based optimized bidding algorithm, leads it to converge to the optimal solution, and find the utility-optimized bids with a logarithmic convergence rate according to Theorem \ref{Theorem2}.\\\\
\textbf{Algorithm1:} \textit{Dynamic Programming-based Optimized Bidding (DPOB) Algorithm}\\
--------------------------------------------------------------------
\begin{algorithm}
\KwIn{\{Set of states: $S$,~Initial state of the system: $\bar{s}=(b^*_{EUT}, BW^*_{EUT})$,~Initial set of actions:$A_{\bar{s}}$\}}

\textbf{Initialize} $\textit{SP\_Utility=0}$, $s^*=(0,0)$ \\
\While{$A_{\bar{s}} \neq \emptyset$}
{
	choose a random action $a_j$ from the set $A_{\bar{s}}$\\
	calculate $R(\bar{s}, a_j)$ using Eq. (\ref{eq27})\\
	\If{$R(\bar{s}, a_j) \geq \textit{SP\_Utility}$}
	{
		\textbf{Set} $\textit{SP\_Utility} = R(\bar{s}, a_j)$\\
		\textbf{Set} $s^*=(b_{s_j},BW_{s_j})$ 
	}
	\textbf{Update set of actions:}\\
	{
		Calculate $d(s_j)$ using SVM classifier in Eq. (\ref{eq21})\\ 
		\If {$d(s_j)=1$} {\textbf{Update} S: remove all actions $a_k$ from $A_{\bar{s}}$ 
			if $(b_{s_k} \leq b_{s_j}) \land (BW_{s_k} \geq BW_{s_j})$}
		\If {$d(s_j)=0$} {\textbf{Update} S: remove all actions $a_k$ from $A_{\bar{s}}$ 
			if $(b_{s_k} \geq b_{s_j}) \land (BW_{s_k} \leq BW_{s_j})$}
	}
}

\KwOut{$s^*$, $\textit{SP\_Utility}.$}
\label{algorithm1}
\end{algorithm}

\begin{theorem}
\label{Theorem2}
The proposed DPOB algorithm converges to the optimal solution with a logarithmic convergence rate of $\mathcal{O} (\log_{1.33} |S| )$.
\end{theorem}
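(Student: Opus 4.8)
The statement bundles two claims---correctness (the algorithm returns a global maximizer of the SP utility over all candidate bids) and the $\mathcal{O}(\log_{1.33}|S|)$ iteration count---so I would prove them in that order. Throughout, I identify the live action set $A_{\bar{s}}$ with a set of candidate bids/states in the discretized $(b,BW)$ grid, and write $\mathcal{F}_t$ for this set after $t$ iterations, so that $|\mathcal{F}_0|=|S|=MN$ and the loop halts exactly when $\mathcal{F}_t=\emptyset$.

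For correctness, the key is that each pruning step discards only states that are provably not the unique optimum, so the maximizer is always either still live or already stored in $s^*$. When the probed bid $s_j$ is accepted ($d(s_j)=1$), every pruned $s_k$ satisfies $b_{s_k}\le b_{s_j}$ and $BW_{s_k}\ge BW_{s_j}$; by the monotonicity of the user and SP utilities recorded in Section~\ref{subsec:SM} (a lower advertised rate and a higher bandwidth can only help the user and only hurt the SP), such an $s_k$ is also accepted and yields an SP utility no larger than that of $s_j$, namely $R(\bar{s},a_j)$, which has already been compared against \textit{SP\_Utility}; hence discarding $s_k$ loses nothing. When $s_j$ is rejected ($d(s_j)=0$), every pruned $s_k$ has $b_{s_k}\ge b_{s_j}$ and $BW_{s_k}\le BW_{s_j}$, is therefore also rejected, and by~(\ref{eq27}) gives reward $0$, so it cannot beat any accepted bid. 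Since no step removes a strictly-better-than-recorded bid and the grid is finite, the $s^*$ returned at termination is a global optimum.

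For the rate, I model a probe at $s_j$ as splitting the live set $\mathcal{F}_t$ into four quadrants according to the coordinates $(b_{s_j},BW_{s_j})$; the Pareto pruning deletes exactly one \emph{diagonal} quadrant---the upper-left (low rate, high bandwidth) one on an accept, the lower-right one on a reject---together with $s_j$ itself. The plan is to show that, taking the expectation over the uniformly random choice of $a_j$, the deleted quadrant contains at least a constant fraction of $\mathcal{F}_t$. The clean heuristic giving the stated base is that a probe partitions the remaining candidates into (essentially) four disjoint quadrants, so an ``average'' quadrant holds about a quarter of them; granting $E[\,|\mathcal{F}_{t+1}|\,]\le \tfrac{3}{4}|\mathcal{F}_t|$ immediately yields $E[\,t_{\mathrm{stop}}\,]=\mathcal{O}(\log_{4/3}|S|)=\mathcal{O}(\log_{1.33}|S|)$, with a standard tail/Markov argument promoting this to the quoted convergence rate.

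The main obstacle is precisely this constant-fraction claim. The quadrant that is deleted is not a uniformly random one: its size is dictated by the (unknown) accept/reject label, which is itself correlated with the probe's position---accepted bids concentrate in the low-$b$, high-$BW$ corner, exactly where the upper-left quadrant is \emph{small}. A single adversarial probe near a corner can thus delete very few points, and for a near-diagonal accept/reject boundary the expected deletion can fall below the $\tfrac14$ that the base $1.33$ encodes. I would therefore either (i) justify the $\tfrac14$ under the assumption that the random probe induces balanced quadrant splits, in which case the heuristic above becomes rigorous, or (ii) replace it with an amortized potential argument that tracks the width of the feasible ``staircase band'' lying between the accept- and reject-probe frontiers, showing that iterations which remove few candidates now necessarily narrow this band, forcing larger removals later. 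Carrying out (ii)---controlling the band width so that the product of the per-step shrinkages still telescopes to a logarithmic count---is where I expect the real work to lie.
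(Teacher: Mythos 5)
Your proposal follows the same two-part structure as the paper's proof, and the correctness half actually matches it while being slightly more careful: the paper argues by contradiction that if the optimal action were ever pruned, some probed state would have to Pareto-dominate it and yield strictly higher SP utility; you instead observe directly that accept-pruning removes only bids whose SP utility is at most the already-recorded reward of the probe, and reject-pruning removes only zero-reward bids, so the running maximum $s^*$ can never lose the optimum. That version also handles ties cleanly, which the paper's contradiction argument glosses over.

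For the rate, the paper's proof is precisely the quadrant heuristic that you describe and then decline to endorse: it asserts that ``on average'' half of the remaining states are Pareto-comparable to the probe, that one of the two comparable groups (a quarter of the live set) is deleted per iteration, and concludes $\mathcal{O}(\log_{4/3}|S|)$ --- without ever addressing the correlation you point out between the accept/reject label and the size of the deleted quadrant. So the gap you flag is not a shortfall of your attempt relative to the paper; it is a gap in the paper's own argument, and it is fatal in the worst case. Concretely, suppose the (predicted) decision rule is to accept $(b, BW)$ exactly when $BW \ge b$ on an $n \times n$ grid. An accepted probe at $(b_j, BW_j)$ deletes $\{(b',BW'): b' \le b_j \land BW' \ge BW_j\}$, and since $BW_j \ge b_j$, a diagonal state $(i,i)$ lies in this set only if $BW_j \le i \le b_j$, forcing $(i,i)$ to be the probe itself; a rejected probe has $BW_j < b_j$ and its deleted set $\{(b',BW'): b' \ge b_j \land BW' \le BW_j\}$ contains no diagonal state at all. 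Hence each of the $n = \sqrt{|S|}$ diagonal states must be probed individually, and the algorithm needs $\Omega(\sqrt{|S|})$ iterations. The logarithmic bound therefore cannot be proved in the form stated: it requires an additional balancedness or distributional assumption on the user's decision boundary (your route (i)), and your route (ii) hits the same obstruction, since for a diagonal boundary the feasible band never narrows except locally at each probe. In short, where you honestly left the key step open, the paper simply asserts it, and the assertion fails on this example.
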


\begin{proof}
	We first show that DPOB algorithm converges to the optimal solution. We prove this by contradiction. Assume $s_{j^*} \in S$ is the optimal state in the MDP such that its associated bid $(b_{s_{j^*}}, BW_{s_{j^*}})$ maximizes the SPs utility, i.e. $a_{j^*}= \underset{a_j \in A_{\bar{s}}}{\operatorname{argmax}}~ [R(\bar{s}, a_j)]$, assuming $\bar{s}$ to be the initial state of the MDP. Since the DPOB algorithm exhausts all possible transition actions from initial state to all other states, the optimal state $s_{j^*}$ will be visited by the algorithm unless the action $a_{j^*}$ gets removed from the set of actions, $A_{\bar{s}}$, during iterative updates of this set in the algorithm. In that case, there must exists a state $s_k \in S $ which Pareto dominates $s_{j^*}$ and results in higher utility for the SP as compared to $s_{j^*}$, which is in contrast with initial assumption that $s_{j^*}$ is the optimal state in the MDP.
	Now we have to show that DPOB algorithm converges to the optimal solution with a logarithmic convergence rate of $\mathcal{O} (\log_{1.33} |S| )$. Assume $s_l \in S$ and $s_h \in S$ to be the states with lowest and highest expected utility for the SP, respectively. If SP bids $s_l$ and user accepts that, still all the other actions are feasible, i.e. no action will be removed from $A_{\bar{s}}$ after its update, while if SP bids $s_h$ and user accepts that, all other actions become infeasible as $s_h$ Pareto dominates all other states, which means the list of actions $A_{\bar{s}}$ becomes empty immediately. However, in general after visiting any new state like $s_j$, on average for half of the remaining feasible states like $s_k$ we either have $(b_{s_k} \leq b_{s_j}) \land (BW_{s_k} \geq BW_{s_j})$, or $(b_{s_k} \geq b_{s_j}) \land (BW_{s_k} \leq BW_{s_j})$, which means either the state $s_j$ Pareto dominates them, or they will Pareto dominate the state $s_j$, and depending on the users response, one group of such states become infeasible, and their associated transition actions will be removed from the set of actions, $A_{\bar{s}}$. Hence, on average the initial set of actions $A_{\bar{s}}$ shrinks by one fourth after each iteration of the algorithm, and become empty after $\mathcal{O} (\log_{(4/3)} |S|) $ iterations which proves the converges to the optimal solution with $\mathcal{O} (\log_{1.33} |S| )$, where $|S|$ denotes the total number of states in the MDP.
\end{proof}

  According to Theorem \ref{Theorem2}, the complexity of the proposed DPOB algorithm is $\mathcal{O} (\log_{1.33} |S| )$ while model-free reinforcement learning algorithms like Q-learning have a complexity of $\mathcal{O} (ne )$ where $n$ denotes the number of states, and $e$ the number of actions \cite{Koenig:1992:CAR}. For the bidding problem considered here, we have $e=n=|S|,$ and it follows that the complexity of a Q-learning algorithm used to solve it would be $\mathcal{O} (|s|^2)$. Thus, the computational gain of DPOB as compared to model free reinforcement learning algorithms is noteworthy since it reduces the network access delay for users which is attractive for SPs to enable and support delay critical applications in 5G HetNets.

\section{Simulation Results}
\label{sec:SimulationResults}
In this section, we provide simulation results to validate the efficiency of the proposed learning based bidding algorithm. As shown in Fig. \ref{fig2}, we consider a HetNet scenario in which there are $U$ randomly distributed mobile users, that are covered by 9 BSs including one cellular BS (CBS) with the coverage radius of $1000 ft$ located in the center of a macro-cell and 8 overlaid small-cell long-range WiFi BSs (WBSs) with maximum coverage radius of of $300 ft.$ Each WBS competes with the CBS to offer data service to each mobile user located in a coverage area common to them. We use the Hata propagation model for urban environments \cite{Rappaport:2002:Hata} to capture the effects of path loss on each of the radio links. Table \ref{table1} details the parameters used in the simulation of the HetNet environment modeled here.
\begin{figure}[htb!]
	\includegraphics[width=\linewidth]{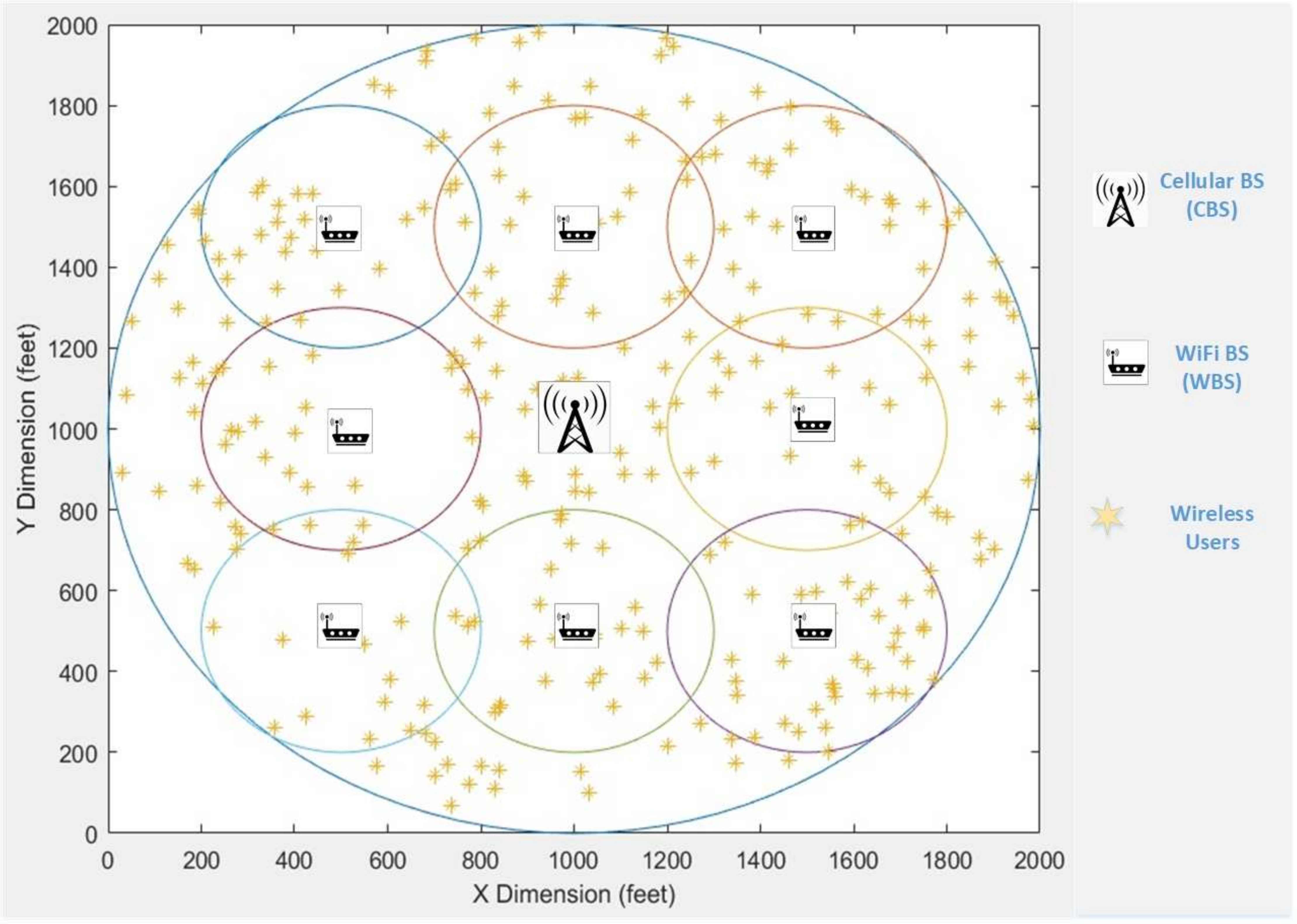}
	\centering
	\caption{Simulated HetNet Scenario.}
	\label{fig2}
\end{figure}

\begin{table}
	\includegraphics[width=\linewidth]{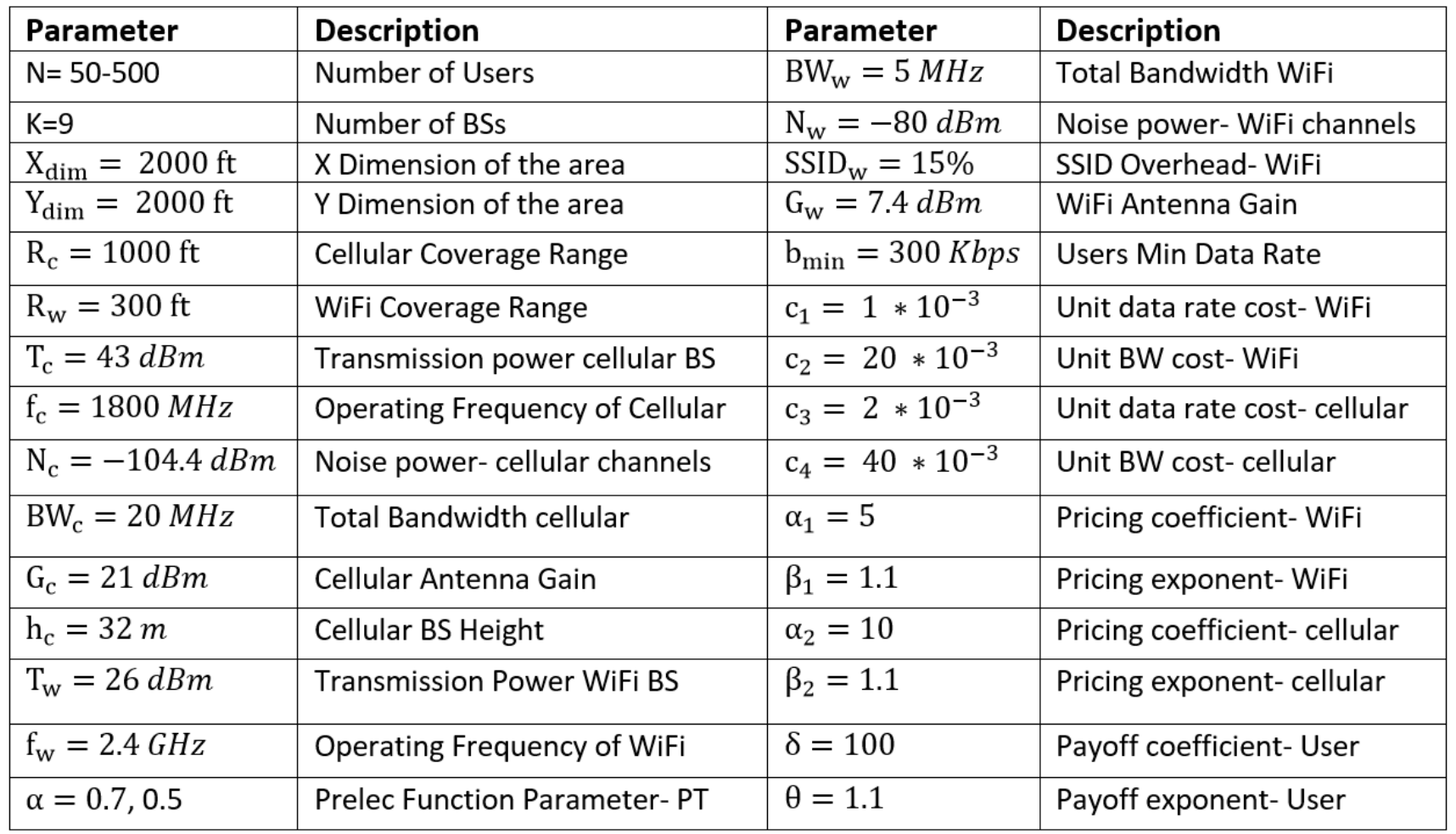}
\centering
\caption{Simulation Parameters.}
\label{table1}
\end{table}
Fig. \ref{fig3} compares the sum utility of the SPs under three scenarios: (i) EUT- where there is no difference between the users objective and subjective perceptions of the SPs service guarantee, (ii) Deviation from EUT- where the users subjectively weight the SPs service guarantees with the Prelec probability weighting function parameterized by $\alpha$, and (iii) DPOB- where the learning algorithm is used to predict the users responses to bids. In the figure, in each of the cases,  the total number of users (load) in the HetNet changes from $U = 50$ to $U = 500$. As we can see, when the number of users is lower than 200, since the advertised data rates are high and low service guarantees in the range of $[0, 1/e]$ are advertised by SPs to satisfy users data rate constraints, users overestimate the value of the SPs bids due to PWE of PT described in Eq. (18). When the total number of users is more than 200, since the advertised data rates are low, the SPs bids include high service guarantees in the range of $[1/e, 1]$ which results in the underestimation of such bids by the users and their subsequent rejection. Thus the deviation from EUT affects the network adversely in the region corresponding to underweighting of the service guarantees. It can also be seen that the proposed DPOB algorithm uniformly improves performance in both low and high load situations, by taking (a) advantage of learning the end users decision making, and (b) optimizing the SPs bids based on the learned information to maximize the users acceptance rate of the offered bids. In fact, as seen in Fig. \ref{fig3}, the sum utility of SPs under DPOB increases on average by a factor of $3.27$ as compared to EUT model.

Further, the gap between DPOB and EUT is bigger in low load situations. The reason is that when the total number of users is low, the SPs allocate more bandwidth to each user and hence their advertised data rates are much higher, which in turn results in a very high price for their bids considering the convex pricing function used by them. This situation results in a rejection of bids by the users since their payoff function is a concave function of the data rate, making the SPs offers unattractive. In the case of DPOB, the ability to learn user actions (accept/reject) by evaluating all feasible bids, allows the SPs to make only those bids that are still affordable to the users. 

Fig. \ref{fig4} compares the sum utility of users under the mentioned three scenarios. As we can see again, the proposed DPOB algorithm uniformly outperforms EUT model in terms of users sum utility, due to offering more affordable bids to users which increases user's acceptance rates and utilities, accordingly. Moreover, we observe that users sum utility under all three models increases with increasing the number of users, before reaching the max network capacity under our setting when $U=400$. The reason is that when the load is low, users receive bids with advertised data rates much higher than their required minimum data rate, and due to SPs convex pricing and users concave payoff functions, such bids result in low utility for users, while by increasing the load, the gap between advertised data rates of SPs and required data rates of users shrinks, which increases the users utility accordingly. As seen in Fig. \ref{fig4}, the sum utility of users under DPOB increases on average by a factor of $1.65$ as compared to the EUT model.

We can also see in both Fig. \ref{fig3} and Fig. \ref{fig4} that when the number of users is more than 400, which is the max network capacity under our setting, most of the SPs are not capable of satisfying the users data rate constraints since their BW budget per user decreases with increasing load. Therefore, the number of users who accept bids, and are connected to BSs decreases, which in turn reduces both the SPs and users sum utilities.
\begin{figure}[htb!]
	\includegraphics[width=\linewidth]{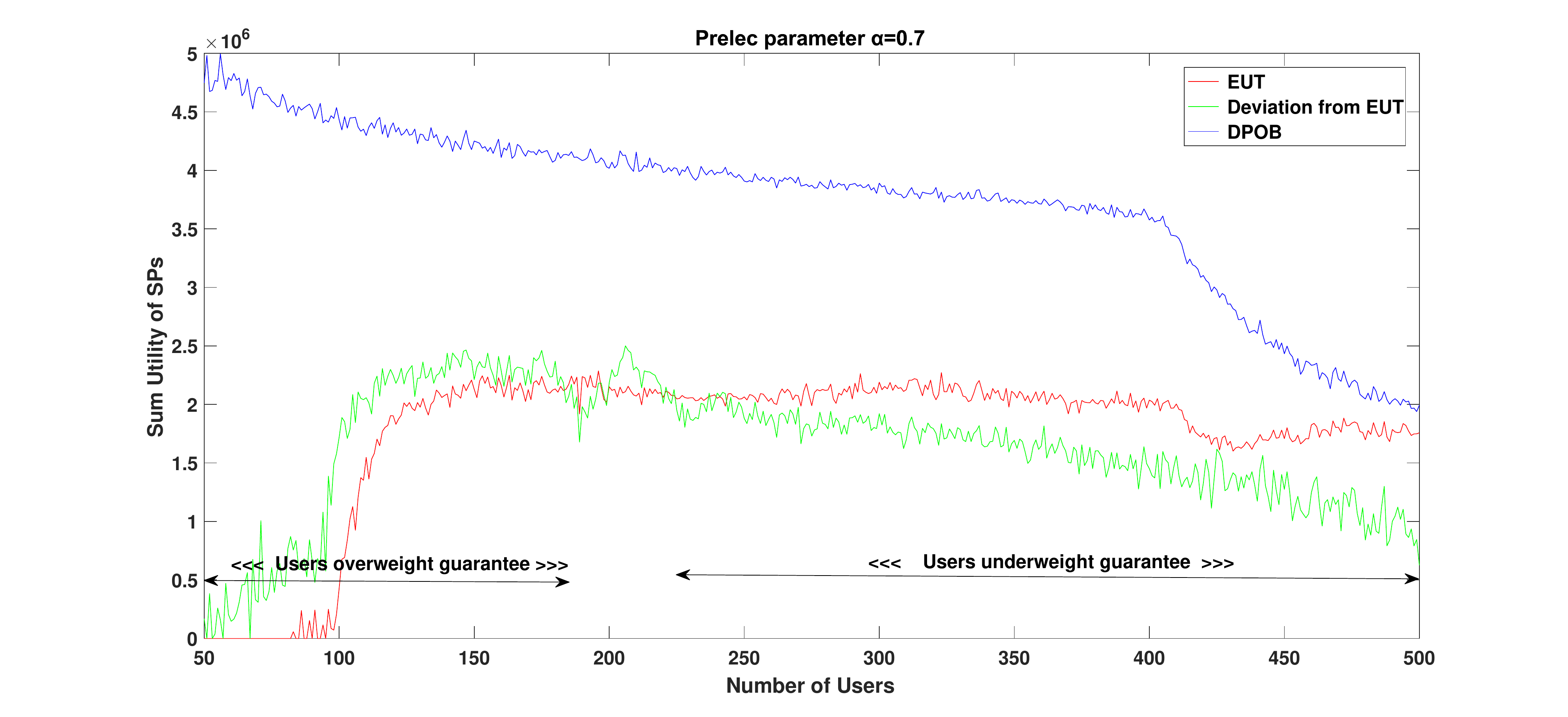}
	\centering
	\caption{Sum Utility of SPs under EUT, Deviation from EUT and DPOB.}
	\label{fig3}
\end{figure}

\begin{figure}[htb!]
	\includegraphics[width=\linewidth]{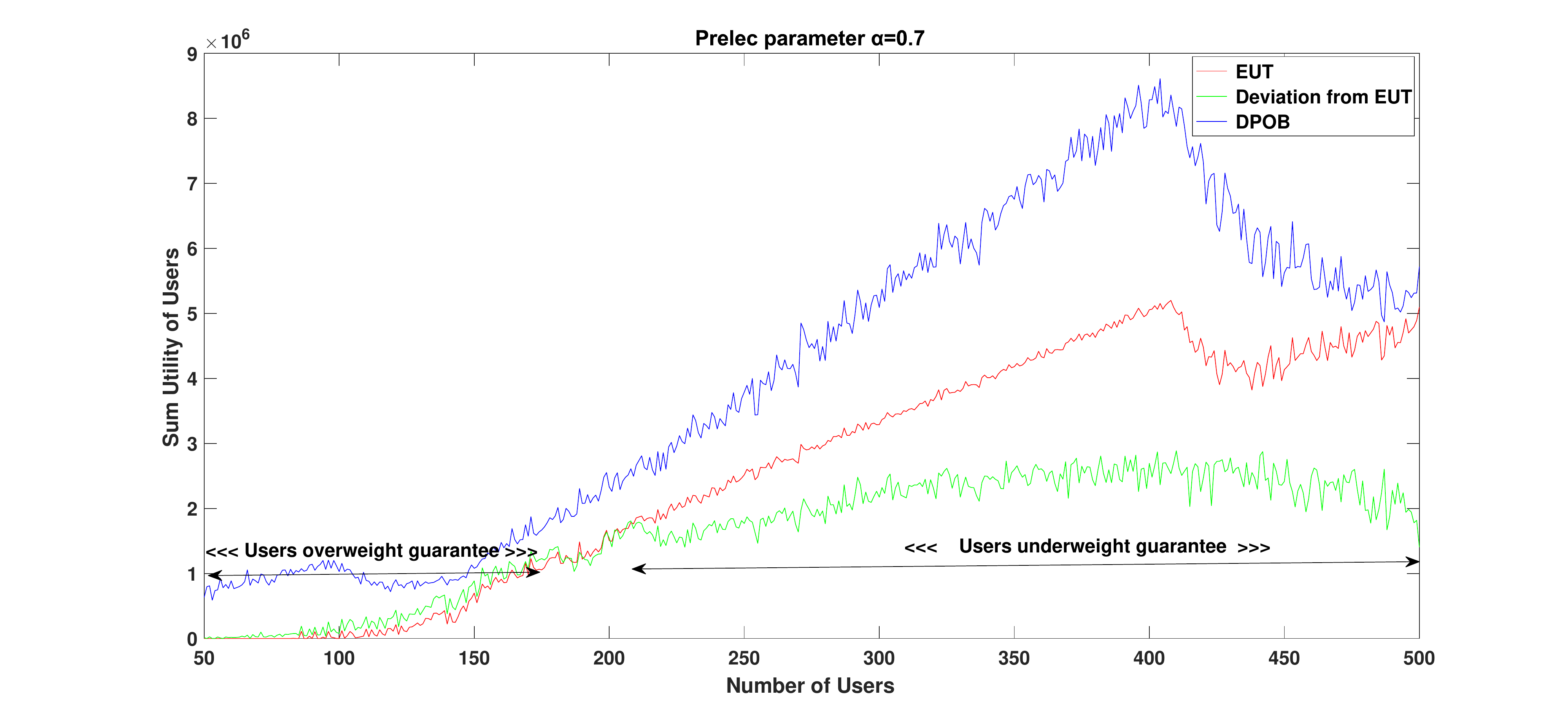}
	\centering
	\caption{Sum Utility of Users under EUT, Deviation from EUT and DPOB.}
	\label{fig4}
\end{figure}

Fig. \ref{fig5} and Fig. \ref{fig6} compare the sum utility of SPs and sum utility of users for the three scenarios, respectively, when the Prelec parameter $\alpha= 0.5$. This corresponds to a situation where the PWE of PT is more intense, i.e. the deviation of end-user behavior from EUT is greater. The performance results in the figures show that the deviation from EUT is more pronounced and the performance improvements in sum utilities using DPOB are also more significant. 

\begin{figure}[htb!]
	\includegraphics[width=\linewidth]{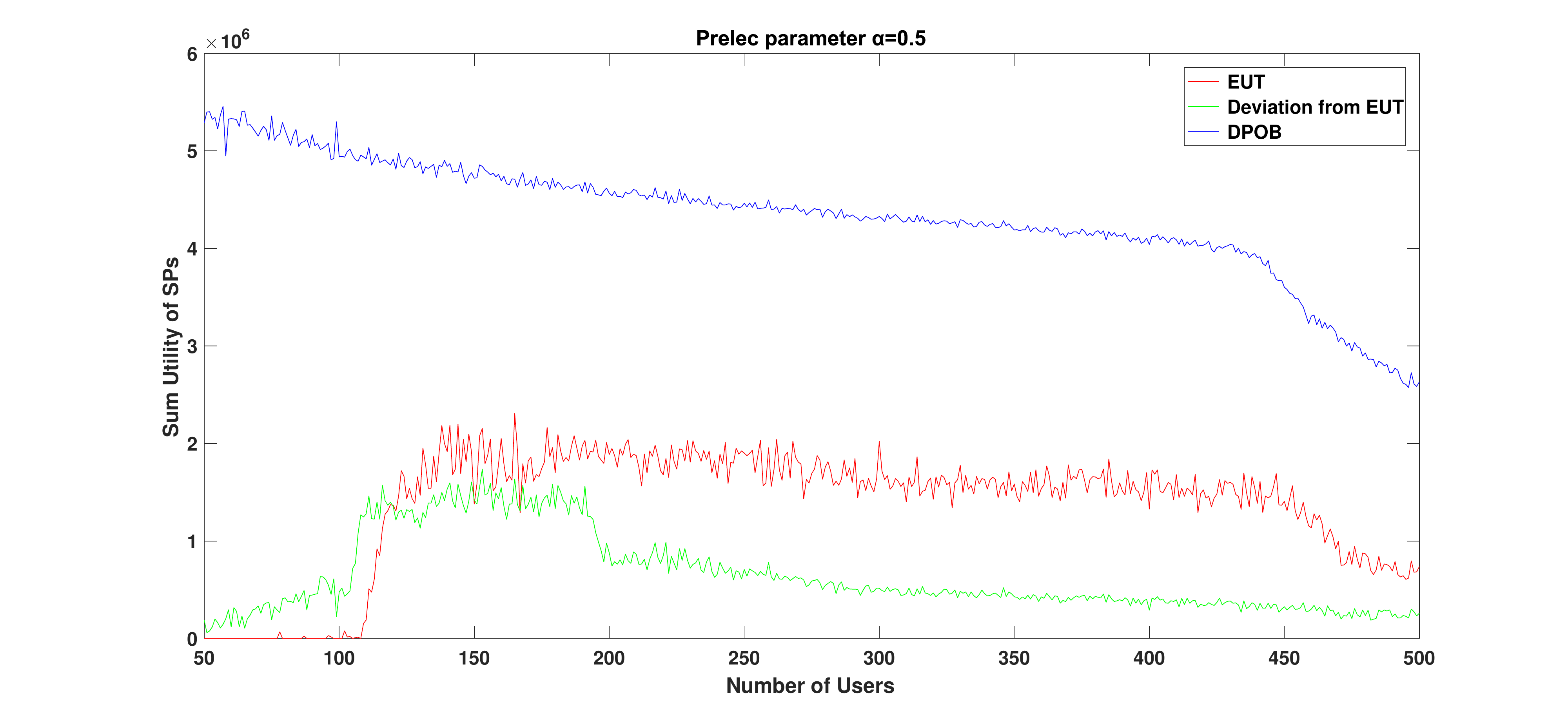}
	\centering
	\caption{Sum Utility of SPs under EUT, Deviation from EUT and DPOB.}
	\label{fig5}
\end{figure}

\begin{figure}[htb!]
	\includegraphics[width=\linewidth]{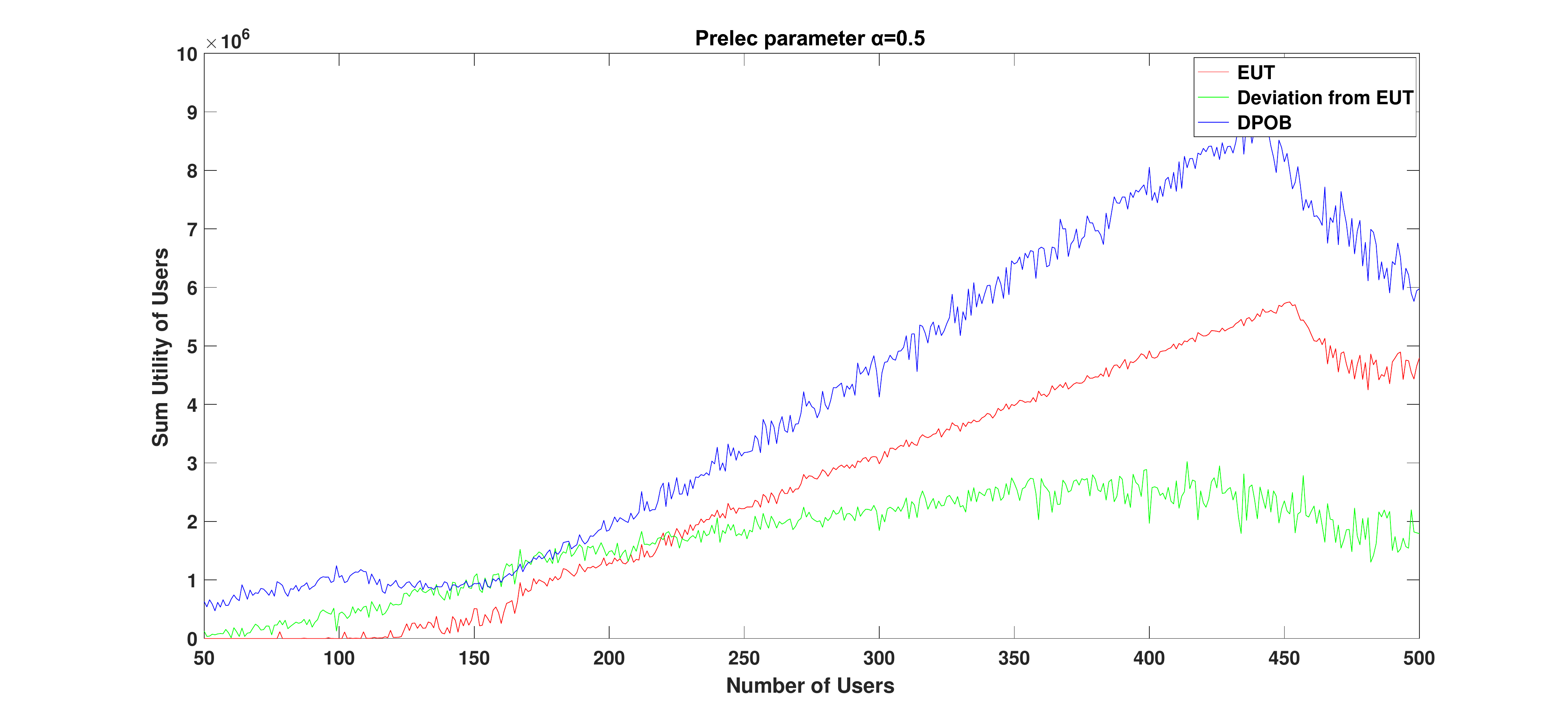}
	\centering
	\caption{Sum Utility of Users under EUT, Deviation from EUT and DPOB.}
	\label{fig6}
\end{figure}

\section{Conclusion}
\label{sec:Conclusion}
In this paper, we studied the impact of end-user behavior on SP bidding and user/network association in a HetNet with multiple SPs while considering the uncertainty in the service guarantees offered by the SPs. We formulated user association with SPs as a multiple leader Stackelberg game where each SP offers a bid to each user that includes a data rate with a certain probabilistic service guarantee and at a given price, while the user chooses the best offer among multiple such bids. Using PT to model end-user decision making that deviates from EUT, we showed that when users underweight the advertised service guarantees of the SPs, the rejection rate of the bids increases dramatically which in turn decreases the SPs utilities and service rates. To overcome this, we proposed a two-stage learning-based optimized bidding framework for SPs. In the first stage, we used a support vector machine learning algorithm to predict users' binary decisions, and then in the second stage we cast the utility-optimized bidding problem as a MDP and proposed the DPOB algorithm to efficiently solve it. Simulation results and computational complexity analysis validated the efficiency of the proposed learning based bidding algorithm.


%

%



\ifCLASSOPTIONcaptionsoff
  \newpage
\fi

\end{document}